\newtheorem{lemma}{Lemma}
\newtheorem{theorem}{Theorem}
\newcommand\mC{\mathcal{C}}
\begin{document}
\title{An Improved Algorithm for Fixed-Hub Single Allocation Problem}
\author{Dongdong Ge\thanks{Research Center for Management Science and Data Analytics, Shanghai University of Finance and Economics, Shanghai,
China 200433. Email: {\tt dongdong@gmail.com}} \and Zizhuo
Wang\thanks{Department of Industrial and Systems Engineering,
University of Minnesota, Minneapolis, MN, 55455, USA. Email: {\tt
zwang@umn.edu}} \and Lai Wei\thanks{Stephen M. Ross School of
Business, University of Michigan, Ann Arbor, MI, 48109, USA. Email:
{\tt laiwi@umich.edu}} \and Jiawei Zhang\thanks{Leonard N. Stern
School of Business, New York University, New York, NY, 10012, USA.
Email: {\tt jzhang@stern.nyu.edu}}
 }\maketitle

\maketitle

\begin{abstract}
\noindent This paper discusses the fixed-hub single allocation
problem (FHSAP). In this problem, a network consists of hub nodes
and terminal nodes. Hubs are fixed and fully connected; each
terminal node is connected to a single hub which routes all its
traffic. The goal is to minimize the cost of routing the traffic in
the network. In this paper, we propose a linear programming
(LP)-based rounding algorithm. The algorithm is based on two ideas.
First, we modify the LP relaxation formulation introduced in Ernst
and Krishnamoorthy (1996, 1999) by incorporating a set of validity
constraints. Then, after obtaining a fractional solution to the LP
relaxation, we make use of a geometric rounding algorithm to obtain
an integral solution. We show that by incorporating the validity
constraints, the strengthened LP often provides much tighter upper
bounds than the previous methods with a little more computational
effort, and the solution obtained often has a much smaller gap with
the optimal solution. We also formulate a {\it robust} version of
the FHSAP and show that it can guard against data uncertainty with
little cost.

\bigskip
\noindent {\it Key words:} hub location; network design; linear
programming; worst-case analysis
\end{abstract}

\section{Introduction}
\label{sec:introduction} Hub-and-spoke networks have been widely
used in transportation, logistics, and telecommunication systems. In
such networks, traffic is routed from numerous nodes of origin to
specific destinations through hub facilities. The use of hub
facilities allows for the replacement of direct connections between
all nodes with fewer, indirect connections. One main benefit is the
economies of scale as a result of the consolidation of flows on
relatively few arcs connecting the nodes. In the United States,
hub-and-spoke routing is practically universal. Airlines adopted it
after the industry was deregulated in $1978$. Many logistics service
providers such as UPS and Fedex also have distribution systems using the
hub-and-spoke structure.

Given its widespread use, it is of practical importance to design
efficient hub-and-spoke networks. In the literature, such problems
are often referred as the hub location problems, in which two major
questions are studied: 1) where the hubs should be located and 2)
how the traffic/flow should be routed. We refer the readers to
\cite{campbell02} for a comprehensive review of the literature on
the hub location problems.

In this paper, we focus on a sub-problem which is called the
fixed-hub single allocation problem (FHSAP). In the FHSAP, the
locations of the hubs are fixed, and the decisions are to assign
each terminal node to a unique hub. Although the FHSAP is a
sub-problem of the hub location problems, it is still of great
interest. First, in many practical situations, the locations of the
hubs are pre-determined and remain unchanged in a medium to long
term. In such cases, the hubs can be viewed as fixed and only the
assignment of the terminal nodes needs to be decided. Second, the
number of nodes that can be used as hubs are usually small, which
makes it possible to enumerate all possible locations of the hubs to
find the optimal location. Therefore, solving the fixed-hub
allocation problem efficiently would be of great help for solving
the hub location problem. Moreover, even confined to fixed hubs,
optimally assigning terminal nodes to hubs is still a challenging
task. Indeed, it is known that FHSAP is NP-Hard even for problem
with 3 hubs \cite{sohn97}. Therefore, designing efficient algorithms
to solve FHSAP is of great interest, both to researchers and
practitioners.

To address the FHSAP, several prior approaches have been proposed.
In O'Kelly \cite{okelly87}, the author proposes a quadratic integer
program to model this problem. The formulation is non-convex and
thus hard to solve. Therefore, the author proposes two heuristic methods to
solve it. Following \cite{okelly87}, several other heuristic methods are
proposed, see, e.g., Klincewicz \cite{klince91}, Campbell
\cite{campbell96} and Skorin-Kapov et al. \cite{skorin96}.

One major method to solve the FHSAP is to use a linearization model
for the quadratic integer program in \cite{okelly87}. Such
linearizations are developed in \cite{campbell94b, okelly95,
okelly96, ernst1, ernst2}. One of the earliest such linearization
model is introduced by \cite{campbell94b}, in which a natural LP
relaxation of the quadratic integer program is obtained. This LP
relaxation is quite attractive: Skorin-Kapor et al \cite{skorin96}
show that it is very tight and outputs integral solutions
automatically in $95\%$ of the instances they test. However, the
size of this LP relaxation is relatively large and thus restricts
its applications to large-scale problems. To solve this problem,
Ernst and Krishnamoorthy \cite{ernst1, ernst2} propose a further
relaxation of the model. The idea of the further relaxation is to
use combined flow variables, and the size of the further relaxed LP
is significantly smaller than that in \cite{campbell94b} and
\cite{skorin96}. However, in some situations, the further relaxed
model has a large gap with the optimal solution.

In this paper, we propose a new LP relaxation for the FHSAP. Our new
LP relaxation is based on the one proposed in \cite{ernst1, ernst2},
but we add a set of flow validity constraints to it. We show that by
adding the flow validity constraints, we can often tighten the gap
between the LP relaxation of \cite{ernst1, ernst2} and the optimal
solution, and yield integral solutions more frequently. Moreover, it
comes with reasonable computational cost. Therefore, we believe our
approach is a good balance between the LP relaxation by
\cite{campbell94b} and \cite{ernst1, ernst2}.

Besides finding a suitable LP relaxation, another important question
is how to round a fractional solution of the LP into a feasible
solution to the FHSAP. In this paper, we adopt a geometric rounding
algorithm introduced by Ge et al. \cite{ge2011}. In \cite{ge2011},
the authors propose a random geometric rounding scheme for a class
of assignment problems. They prove that this rounding technique can
be applied to the FHSAP and lead to a constant ratio approximation
algorithm under the equilateral structure. In this paper, we show
that our newly proposed LP relaxation combined with the geometric
rounding algorithm yields good solutions to the FHSAP efficiently.
It is worth noting that another dependent rounding scheme by
Kleinberg and Tardos \cite{kt2001} can also be adopted to round the
solutions.

In practical cases, the demands in the FHSAP may be unknown. To
tackle such situations, we propose a {\it robust} programming
approach for the FHSAP when the demands are only known to be within
a certain convex set. We derive a convex programming relaxation for
the {\it robust} formulation which can be solved efficiently. We
show in our numerical tests that by employing the decisions of the
{\it robust} model, we can guard against the demand uncertainty with
little cost, therefore it might be of practical interest.

The remainder of the paper is organized as follows. In Section
\ref{sec:model}, we introduce the model and the LP relaxation we
propose for the FHSAP. Then we introduce the geometric rounding
scheme in Section \ref{sec:rounding}. In Section
\ref{sec:numerical}, we perform numerical tests to show that our
proposed approach can indeed obtain better solutions to this
problem. In Section \ref{sec:robust}, we establish a {\it robust} model
for the FHSAP, and study the solution of the {\it robust} model. Then we
conclude our paper in Section \ref{sec:conclusion}.

\section{Model and Formulation}
\label{sec:model} This section defines the fixed-hub single
allocation problem, reviews and modifies previously proposed
mathematical programs. By the terminology of communication networks,
the problem is to build a two-level network consisting of {\it hubs}
and {\it terminal nodes} , see Figure \ref{fig:illustration} for an
illustration. In the FHSAP, we assume that there are $k$ fixed hubs
denoted by $\mathcal{H}=\{1,2,\ldots, k\}$ (airports, routers,
concentrators, etc.), which are transit nodes that are used to route
traffic. There are $n$ terminals nodes denoted by
$\mathcal{C}=\{1,2,\ldots, n\}$ (cities, computers, etc.) which
represent the origins and the destinations of the traffic. Here all
hubs are fully connected and each terminal node is connected to
exactly one hub.

\begin{figure}
\centering
\includegraphics[width=3.0in]{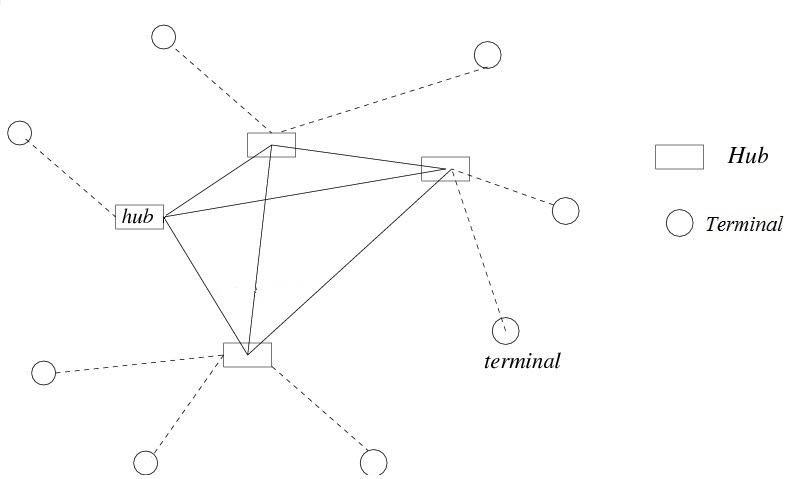}
\caption{An illustration of the two-level network.}
\label{fig:illustration}
\end{figure}

In this network, there is a demand $d_{ij}$ to be routed from $i$ to
$j$, for each pair of terminal nodes $i$ and $j$. In order to route
the demands between two terminal nodes, the origin node has to
deliver all its demands to the hub it is assigned to. Then this hub
sends them to the hub the destination node is assigned to (this step
is skipped if both nodes are assigned to the same hub). Finally the
destination node gets the demands from its hub. No direct routing
between two terminal nodes is permitted. Two types of costs are
counted during the transportation, a per unit transportation cost
$c_{is}$ to transport demand from terminal node $i$ to hub $s$ and a
per unit transportation cost $c_{st}$ to transport demand from hub
$s$ to hub $t$. The problem is to assign a hub for each terminal
node such that the total transportation cost is minimized.

The first mathematical formulation to study the FHSAP is by O'Kelly
\cite{okelly87}, in which he formulates the problem as a quadratic
integer program.\footnote{In fact, his formulation is for a more
general problem, the uncapacitated single allocation $p$-hub median
problem. In this paper, we only confine our discussion to the FHSAP
and thus adapt his formulation (and later formulations) to the
FHSAP.} Define $\vec{x}=\{x_{is}:i\in\mathcal{C}, s\in\mathcal{H}
\}$ to be the assignment variables. The quadratic formulation for
the FHSAP is:

{\bf Problem FHSAP-QP}
\begin{eqnarray*}
 \mbox{minimize} & {\displaystyle \sum_{i,j\in \mathcal{C}}
d_{ij}\left( \sum_{s\in \mathcal{H}} c_{is} x_{is}+ \sum_{t\in
\mathcal{H}}
c_{jt}x_{jt} +\sum_{s,t\in \mathcal{H}} c_{st}x_{is}x_{jt} \right)} & \\
 \mbox{subject to}
&\displaystyle \sum_{s\in \mathcal{H}}x_{is}=1, & \forall i\in \mathcal{C},\\
 & x_{is}\in\left\{ 0,1\right\},  & \forall i\in\mathcal{C},
 s\in\mathcal{H}.
\end{eqnarray*}

Here we assume that all coefficients $d_{ij}, c_{is}, c_{jt},
c_{st}$ are non-negative and $c_{st}=c_{ts}$, $c_{ss}=0$, for all
$i,j\in\mathcal{C}$ and $s,t\in\mathcal{H}$. Note that the
transportation cost from cities to hubs, $ \sum_{i,j\in
\mathcal{C}}d_{ij}( \sum_{s\in \mathcal{H}}c_{is} x_{is}+\sum_{t\in
\mathcal{H}}c_{jt}x_{jt})$, is linear in $\vec{x}$. Later we call it
the {\em linear cost} and denote it by $L(\vec{x})$. Similarly, we
call the other part of the objective function the {\em inter-hub
cost} or {\em quadratic cost}, and denote it by $Q(\vec{x})$.

Campbell~\cite{campbell94b} linearized O'Kelly's model by
formulating a mixed integer linear program (MILP) as follows:

{\bf Problem MILP1}
\begin{eqnarray*}
\mbox{minimize} & {\displaystyle \sum_{i,j\in \mathcal{C}}
\sum_{s,t\in
\mathcal{H}}d_{ij}(c_{is}+c_{st}+c_{jt})X_{ijst}} & \\
\mbox{subject to}
&\displaystyle \sum_{s,t\in \mathcal{H}}X_{ijst}=1, & \forall i,j\in \mathcal{C},\\
&\displaystyle \sum_{t\in \mathcal{H}}X_{ijst}=x_{is}, & \forall i,j\in \mathcal{C}, s\in \mathcal{H},\\
&\displaystyle \sum_{s\in \mathcal{H}}X_{ijst}=x_{jt}, & \forall i,j\in \mathcal{C}, t\in \mathcal{H},\\
& X_{ijst}\geq 0, & \forall i,j\in\mathcal{C}, s,t\in\mathcal{H},\\
& x_{is}, x_{jt}\in\left\{ 0,1\right\},  & \forall i\in\mathcal{C},
s\in\mathcal{H}.
\end{eqnarray*}

Here $X_{ijst}$ is the portion of the flow from city $i$ to city $j$
via hub $s$ and $t$ sequentially. The formulation involves
$O(n^2k^2)$ nonnegative variables and $O(n^2k)$ constraints. This
formulation enables us to obtain an LP relaxation for the FHSAP by
replacing the zero-one constraints with non-negative constraints. In
the following, we refer this LP relaxation as LP1. As shown in
\cite{skorin96}, LP1 is usually very tight and often produces
integral solutions. However, the size of LP1 is relative large,
which restricts its applications to large-scale problems.

In order to reduce the solution complexity, Ernst and Krishnamoorthy
\cite{ernst1,ernst2} propose a flow formulation to obtain a further
relaxation of this problem. In this formulation, one does not need
to specify the route for a pair of terminal nodes $i$ and $j$, i.e.,
one does not need the decision variable $X_{ijst}$. Instead, one
defines $\vec{Y}=\{Y^i_{st}: i \in \mathcal{C}, s, t \in
\mathcal{H}, s\neq t\}$ where $Y^i_{st}$ is the total amount of the
flow originated from city $i$ and routed from hub $s$ to a different
hub $t$. Then the FHSAP can be further relaxed to:

{\bf Problem MILP2}
\begin{eqnarray}
\mbox{minimize} & {\displaystyle \sum_{i\in \mathcal{C}} \sum_{s\in
\mathcal{H}}c_{is}(O_i+D_i)x_{is}+\sum_{i \in
\mathcal{C}}\sum_{s,t \in \mathcal{H} :s\neq t}c_{st}Y^i_{st}} & \nonumber\\
\mbox{subject to}
&\displaystyle \sum_{s\in \mathcal{H}}x_{is}=1, & \forall i\in \mathcal{C},\nonumber\\
&\displaystyle \sum_{t\in \mathcal{H}:t\neq s}Y^i_{st}-\sum_{t\in
\mathcal{H}:t\neq s}Y^i_{ts}=O_i x_{is}-\sum_{j \in
\mathcal{C}}d_{ij}x_{js},
& \forall i\in \mathcal{C}, s\in \mathcal{H},\nonumber\\
& x_{is}\in\left\{ 0,1\right\},  & \forall i\in\mathcal{C},
s\in\mathcal{H},\nonumber\\
& Y^i_{st} \ge 0, & i \in \mathcal{C}, s, t \in \mathcal{H}, s\neq
t. \label{fhsap-milp2}
\end{eqnarray}
where $O_i=\sum_{j\in \mathcal{C}}d_{ij}$ and $D_i=\sum_{j \in
\mathcal{C}}d_{ji}$ denote the total demands from and to $i$
respectively. Note that this modified formulation involves only
$O(nk^2)$ nonnegative variables and $O(nk)$ linear constraints,
which decreases from that of MILP1 by a factor of $n$. We can then
obtain an LP relaxation from MILP2, which we denote by LP2.

To see that MILP2 is indeed a further relaxation of the problem,
note that any feasible assignment $\vec{x}$ to the FHSAP with the
flow vector $\vec{Y}$ is always a feasible solution to MILP2 with
the objective value equal to the transportation cost. Since MILP2
reduces the formulation size by $n$, its LP relaxation is also
easier to solve. However, despite that it is proved that MILP2 is an
exact formulation when all the costs in the system are equal, in
general, there might be a positive gap between the optimal value of
MILP2 and the true optimal solution. And in our numerical tests, we
find that the gap sometimes is quite large. Therefore, it is useful
to find an improved formulation of MILP2 without adding too much
complexity.

In the following, we propose a stronger formulation than MILP2. The
main idea is to add a set of validity constraints based on the
following observation.

\begin{lemma}\label{lemma-flow-cons}
Let $\vec{x}$ and $\vec{Y}$ be defined as in {\em MILP2}. For any $i
\in \mathcal{C}$ and $s\in \mathcal{H}$, we have
\begin{equation}\label{flow-cons}
\sum_{t\in \mathcal{H} :t\neq s }Y^i_{st}+\sum_{t\in \mathcal{H}
:t\neq s}Y^i_{ts}= \sum_{j \in \mathcal{C}}d_{ij}|x_{is}-x_{js}|.
\end{equation}
\end{lemma}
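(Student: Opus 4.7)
The plan is a short case analysis on the binary value of $x_{is}$, using the fact that for a 0-1 assignment $\vec{x}$ the flow vector $\vec{Y}$ described in the text is the canonical, cycle-free flow: each origin $i$ is assigned to a unique hub $s_i$ (the one with $x_{is_i}=1$), and for every destination $j$ the demand $d_{ij}$ is shipped directly from $s_i$ to $s_j$ when $s_i \neq s_j$, and otherwise never touches the inter-hub network. Equivalently, $Y^i_{st} = x_{is}\sum_{j:x_{jt}=1} d_{ij}$ for $s \neq t$, so the $i$-flow enters the inter-hub network only at $s_i$ and terminates at the hubs to which the destinations of $i$ are assigned.

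First, I would dispose of the case $x_{is}=1$ (so $s=s_i$). Here the inflow $\sum_{t\neq s} Y^i_{ts}$ vanishes because no $i$-flow ever arrives at $s$ from outside, while the outflow $\sum_{t\neq s} Y^i_{st}$ collects a contribution $d_{ij}$ for exactly those $j$ whose hub differs from $s$, giving $\sum_{j:x_{js}=0} d_{ij}$. Since $|x_{is}-x_{js}|=1-x_{js}$ in this case, the right-hand side of (\ref{flow-cons}) reduces to the same sum.

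Next I would treat the case $x_{is}=0$. Now the outflow $\sum_{t\neq s} Y^i_{st}$ is zero because $s$ is not the entry point for any $i$-flow, and the inflow reduces to the single nonzero term $Y^i_{s_i,s} = \sum_{j:x_{js}=1} d_{ij}$, i.e.\ the demand from $i$ to destinations assigned to $s$. Since $|x_{is}-x_{js}|=x_{js}$ here, the right-hand side of (\ref{flow-cons}) again coincides.

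The one conceptual point worth flagging, and the only thing resembling an obstacle, is that the constraints of MILP2 on their own enforce only the \emph{signed} balance $\mathrm{out}-\mathrm{in}$, whereas (\ref{flow-cons}) is an \emph{unsigned} identity on $\mathrm{out}+\mathrm{in}$ and would fail if $\vec{Y}$ contained extraneous circulation. The lemma therefore genuinely depends on taking $\vec{Y}$ to be the natural path-flow induced by the 0-1 assignment $\vec{x}$, which is the same interpretation used when arguing that MILP2 is a valid relaxation. Once that is pinned down, the identity is essentially the observation that $|x_{is}-x_{js}|$ is the indicator that exactly one of $i,j$ is assigned to $s$, which is precisely when $d_{ij}$ uses an inter-hub arc incident to $s$.
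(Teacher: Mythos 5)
Your proof is correct and follows essentially the same two-case analysis on the binary value of $x_{is}$ that the paper uses, arriving at the identical identifications of the inflow and outflow sums in each case. Your additional remark---that the identity relies on $\vec{Y}$ being the natural path-flow induced by the integral assignment rather than merely any vector satisfying the signed balance constraints of MILP2---is a valid clarification that the paper leaves implicit, but it does not change the substance of the argument.
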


\begin{proof}
We verify equation (\ref{flow-cons}) in two cases.

\begin{enumerate}
\item If $x_{is}=0$, then $$\sum_{t\in \mathcal{H} :t\neq
s}Y^i_{st}+\sum_{t\in \mathcal{H} :t\neq s}Y^i_{ts}=\sum_{t\in
\mathcal{H}: t \neq s}Y^i_{ts}=\sum_{j \in \mathcal{C}}d_{ij} x_{js}
=\sum_{j \in \mathcal{C}}d_{ij}|x_{is}-x_{js}|.$$

\item If $x_{is}=1$, then
$$\sum_{t\in \mathcal{H} :t\neq s}Y^i_{st}+\sum_{t\in
\mathcal{H} :t\neq s}Y^i_{ts}=\sum_{t\in \mathcal{H}: t \neq
s}Y^i_{st} =\sum_{j \in \mathcal{C}: x_{js}=0}d_{ij}=\sum_{j \in
\mathcal{C}}d_{ij}(1-x_{js})=\sum_{j \in
\mathcal{C}}d_{ij}|x_{is}-x_{js}|.$$
\end{enumerate}

Therefore, equation (\ref{flow-cons}) holds in both cases.
\end{proof}

Based on Lemma \ref{lemma-flow-cons}, we obtain a strengthened
formulation of (\ref{fhsap-milp2}) with additional constraints
\begin{eqnarray*}
\sum_{t\in \mathcal{H} :t\neq s }Y^i_{st}+\sum_{t\in \mathcal{H}
:t\neq s}Y^i_{ts}= \sum_{j \in \mathcal{C}}d_{ij}y_{ijs}\\
x_{is} - x_{js} \le y_{ijs}\\
x_{js} - x_{is} \le y_{ijs}.
\end{eqnarray*}
We call this problem MILP2' and its LP relaxation LP2'. Note that
LP2' has both $O(n^2k+nk^2)$ variables and constraints. We further
reduce the number of additional constraints by summing up the
validity constraints. We get our final formulation as follows:

{\bf Problem MILP3}
\begin{eqnarray}
\mbox{minimize} & {\displaystyle \sum_{i\in \mathcal{C}} \sum_{s\in
\mathcal{H}}c_{is}(O_i+D_i)x_{is}+\sum_{i \in
\mathcal{C}}\sum_{s,t \in \mathcal{H} :s\neq t}c_{st}Y^i_{st}} & \nonumber\\
\mbox{subject to}
&\displaystyle \sum_{s\in \mathcal{H}}x_{is}=1, & \forall i\in \mathcal{C},\nonumber\\
&\displaystyle \sum_{t\in \mathcal{H} :t\neq s}Y^i_{st}-\sum_{t\in
\mathcal{H}: t\neq s}Y^i_{ts}=O_i x_{is}-\sum_{j \in
\mathcal{C}}d_{ij}x_{js}, &
\forall i\in \mathcal{C}, s\in \mathcal{H},\nonumber\\
&\displaystyle 2\sum_{i \in \mathcal{C}}\sum_{s,t\in \mathcal{H}
:s\neq t }Y^i_{st}= \sum_{i,j \in \mathcal{C}}\sum_{s \in
\mathcal{H}}d_{ij}y_{ijs}, & \nonumber\\
& x_{is}-x_{js}\leq y_{ijs}, & \forall i,j\in \mathcal{C}, s\in
\mathcal{H},\nonumber\\
& x_{js}-x_{is}\leq y_{ijs}, & \forall i,j\in \mathcal{C}, s\in
\mathcal{H},\nonumber\\
& x_{is}\in\{0,1\} \nonumber\\
 & Y^i_{st}, x_{is}, y_{ijs} \ge 0, & \forall i \in \mathcal{C}, s, t \in
\mathcal{H}, s\neq t. \label{flow-constraints-in-LP}
\end{eqnarray}

We call the LP relaxation of MILP3 by LP3. The number of variables
and constraints in MILP3 and LP3 are both $O(n^2 k+nk^2)$. Although
it doesn't reduce the size of LP2' significantly, computational
results indicate that LP3 can be solved much more efficiently, yet
the results are usually quite good.

The above formulations can serve two purposes. First, it provides a
tighter lower bound for the FHSAP than LP2. Second, it provides a
new way to solve the FHSAP using LP relaxations. In the next
section, we show how to obtain an integral solution from the
fractional solution solved from the LP relaxations. In Section
\ref{sec:numerical}, we perform numerical tests to show the
performance of our proposed approach.

\section{Rounding Procedure: A Geometric Rounding Algorithm}
\label{sec:rounding}

Note that in the above formulations, a solution to the FHSAP can be
completely defined by the assignment variables $\{x_{ik}\}$.
Therefore, after solving an LP relaxation (LP1, LP2 or LP3), we only
need to focus on rounding the fractional assignment variables to
binary integers. Note that in the three relaxations presented above
(LP1, LP2 or LP3), we all have the constraints that $\sum_s x_{is} =
1$. Therefore, for a terminal node $i$, any optimal solution
$x_i=(x_{i1}, \ldots, x_{ik})$ of the LP relaxation must fall on the
standard $k-1$ dimensional simplex:
\begin{eqnarray*}
\Delta_k = \left\{ w\in \mathbb{R}^k \left| w\geq 0, \sum_{i=1}^k
w_i=1 \right.\right\}.
\end{eqnarray*}
A fractional assignment vector on node $i$ corresponds to a
non-vertex point of $\Delta_{k}$. Our goal is to round any
fractional solution to a vertex point of $\Delta_{k}$, which is of
the form:
\begin{eqnarray*}
\left\{w\in \mathbb{R}^k \left| w_i \in \{0,1\}, \sum_{i=1}^k
w_i=1\right.\right\}.
\end{eqnarray*}
It is clear that $\Delta_k$ has exactly $k$ vertices. In the
following, we denote the vertices of $\Delta_k$ by $v_1, v_2,
\cdots, v_k$, where the $i$th coordinate of $v_i$ is $1$.

Before presenting the rounding procedure, we define some notations
of the geometry of the problem. For a point $x\in \Delta_{k}$,
connect $x$ with all vertices $v_1,\ldots,v_k$ of $\Delta_{k}$.
Denote the polyhedron which exactly has vertices $\{x,v_1,\ldots,
v_{i-1},v_{i+1},\ldots,v_k\}$ by $A_{x,i}$. Thus the simplex
$\Delta_{k}$ can be partitioned into $k$ polyhedrons $A_{x,1},
\ldots, A_{x,k}$, and the interiors of any distinct pair of these
$k$ polyhedrons do not intersect. 

\begin{figure}
\centering
\includegraphics[width=2.5in]{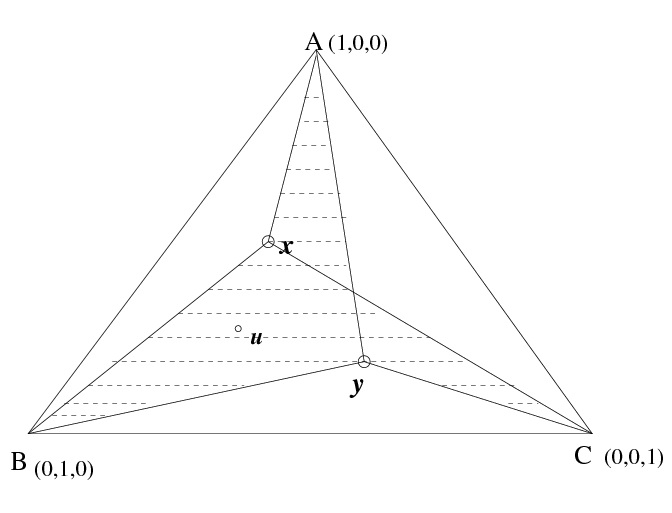}
\caption{By
the geometric rounding method, $\hat{x}=(1,0,0)$, $\hat{y}=(0,0,1)$
as the graph indicates.} \label{geo-rounding}
\end{figure}
We are now ready to present our randomized rounding algorithm. Note
that this rounding procedure is applicable to problems other than
FHSAP, as long as the feasible set of the problems is the set of
vertices of a simplex.

\noindent{\bf Geometric Rounding Algorithm (FHSAP-GRA)}:
\begin{enumerate}
\item Solve an LP relaxation of the FHSAP (LP1, LP2 or LP3). Denote the optimal solution by $\vec{x}^*$.

\item  Generate a random vector $u$, which follows a uniform
distribution on $\Delta_{k}$.

\item For each $x^*_i=(x^*_{i1},\dots, x^*_{ik})$, if $u$ falls
into $A_{x^*_i,s}$, let $\hat{x}_{is}=1$; other components
$\hat{x}_{it}=0$.

\item Output $\hat{x}$.
\end{enumerate}

An illustration of this procedure is shown in Figure
\ref{geo-rounding}. Next we discuss some theoretical properties of
this rounding technique. We first consider the uniform cost case, in
which the inter-hub costs $c_{st}$ are all equal. In this case, an
important observation is that for any integral solution $\{x_{is}\}$
\begin{eqnarray*} 2\sum_{s,t\in\mathcal{H}}
\hat{x}_{is}\hat{x}_{jt}= 2\sum_{s,t\in\mathcal{H}: s \neq t}
\hat{x}_{is}\hat{x}_{jt}=
2\left(1-\sum_{s\in\mathcal{H}}\hat{x}_{is}\hat{x}_{js}\right)
=2\sum_{s\in\mathcal{H}}|\hat{x}_{is}-\hat{x}_{js}|.
\end{eqnarray*}

The above observation is made by Kleinberg and Tardos \cite{kt2001}
and some other literature thereafter. Furthermore, by exploring the
structure of the geometric rounding, Ge et al. \cite{ge2011} proved the
following two properties of this rounding technique.

\begin{theorem}\label{thm:equil-linearcost}
For any given $i\in \mathcal{C},l\in \mathcal{H}$,
$\mathbb{E}[\hat{x}_{il}]=x^*_{il}$.
\end{theorem}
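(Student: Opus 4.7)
The plan is to reduce the expectation to a volume ratio and then evaluate that ratio using the barycentric interpretation of the coordinates $x^*_{il}$.

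First, I would observe that since $\hat{x}_{il}$ is the indicator of the event $\{u \in A_{x^*_i, l}\}$ by Step 3 of FHSAP-GRA, and since $u$ is drawn uniformly from the simplex $\Delta_k$, we have
\begin{equation*}
\mathbb{E}[\hat{x}_{il}] \;=\; \Pr[u \in A_{x^*_i, l}] \;=\; \frac{\mathrm{Vol}(A_{x^*_i, l})}{\mathrm{Vol}(\Delta_k)},
\end{equation*}
where $\mathrm{Vol}(\cdot)$ denotes the $(k-1)$-dimensional volume induced on the affine hull of $\Delta_k$. So everything reduces to computing this volume ratio.

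Next I would exploit the fact that $A_{x^*_i, l}$ and $\Delta_k$ share a common facet, namely the face $F_l$ spanned by the vertices $\{v_1,\ldots,v_{l-1},v_{l+1},\ldots,v_k\}$ opposite to $v_l$. The only difference between the two simplices is that in $\Delta_k$ the apex over $F_l$ is $v_l$, whereas in $A_{x^*_i, l}$ the apex is $x^*_i$. The volume of a simplex is $\tfrac{1}{k-1}\cdot\mathrm{Vol}(F_l)\cdot h$, where $h$ is the distance from the apex to the affine hull of $F_l$. Hence the volume ratio equals the ratio of heights of $x^*_i$ and $v_l$ above $F_l$.

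The key computation is to identify that ratio with $x^*_{il}$. Writing $x^*_i = \sum_{s=1}^{k} x^*_{is} v_s$ (which is valid since $x^*_i\in\Delta_k$), I split this as $x^*_i = x^*_{il} v_l + (1-x^*_{il})\, p$, where $p = \frac{1}{1-x^*_{il}}\sum_{s\neq l} x^*_{is} v_s$ is a convex combination of $\{v_s\}_{s\neq l}$ and therefore lies in the affine hull of $F_l$ (assuming $x^*_{il} < 1$; the edge cases $x^*_{il}\in\{0,1\}$ are trivial). Thus $x^*_i$ lies on the segment from $p\in\mathrm{aff}(F_l)$ to $v_l$, at fraction $x^*_{il}$ of the way toward $v_l$, so its distance to $\mathrm{aff}(F_l)$ is exactly $x^*_{il}$ times the distance from $v_l$ to $\mathrm{aff}(F_l)$. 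Plugging this ratio into the volume formula gives $\mathrm{Vol}(A_{x^*_i,l}) = x^*_{il}\cdot\mathrm{Vol}(\Delta_k)$, completing the proof.

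The main obstacle is essentially bookkeeping: making the height/barycentric argument rigorous (one could alternatively appeal directly to the change-of-variables formula for simplex volumes, whose Jacobian determinant equals the $l$-th barycentric coordinate of the replaced vertex). No probabilistic subtlety enters beyond the uniform distribution assumption on $u$.
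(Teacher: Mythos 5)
Your proof is correct. Note that the paper itself does not prove this theorem---it only quotes it from Ge et al.~\cite{ge2011}---and your volume-ratio argument (that $A_{x^*_i,l}$ and $\Delta_k$ share the facet opposite $v_l$, so their volumes are in the ratio of the apex heights, which the barycentric decomposition identifies as $x^*_{il}$) is essentially the standard proof given in that reference, so there is nothing to add.
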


For any $x$ and $y$, define $d(x,y):=\sum_s |x_{s}-y_{s}|$. Then
\begin{theorem}\label{thm:dis}
For any $x,y\in \Delta_{k}$, if we randomly round $x$ and $y$ to
vertices $\hat{x}$ and $\hat{y}$ in $\Delta_{k}$ by the procedure in
{\em FHSAP-GRA}, then
$$\mathbb{E}[d(\hat{x}, \hat{y})]\leq 2d(x,y).$$
\end{theorem}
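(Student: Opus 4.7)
Since $\hat{x},\hat{y}\in\{v_1,\ldots,v_k\}$, we have $d(\hat{x},\hat{y})\in\{0,2\}$, so $\mathbb{E}[d(\hat{x},\hat{y})] = 2\Pr[\hat{x}\neq\hat{y}]$ and it suffices to prove $\Pr[\hat{x}\neq\hat{y}]\leq d(x,y)$. The events $\{\hat{x}=\hat{y}=v_s\}$ are disjoint over $s\in\mathcal{H}$, so this is equivalent to the ``collision'' bound
\begin{equation*}
\sum_{s\in\mathcal{H}} \Pr\!\bigl[u\in A_{x,s}\cap A_{y,s}\bigr] \;\geq\; 1-d(x,y).
\end{equation*}

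To evaluate each summand I would first derive a halfspace description of $A_{x,s}$: expanding an arbitrary point as $w=\alpha x+\sum_{j\neq s}\beta_j v_j$ and reading off $\alpha=w_s/x_s$ gives $A_{x,s}=\{w\in\Delta_k : x_j w_s\leq x_s w_j \text{ for all }j\}$. Intersecting with the analogous description of $A_{y,s}$, the region $A_{x,s}\cap A_{y,s}$ is again a sub-simplex with the $k-1$ vertices $\{v_j:j\neq s\}$ together with a single ``apex'' $z^{(s)}$ determined by the binding constraints $w_j=\lambda_{js}w_s$, where $\lambda_{js}:=\max(x_j/x_s,\,y_j/y_s)$; a short calculation yields $z^{(s)}_s=1/\sum_j \lambda_{js}$. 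The same volume-ratio principle that produces $\Pr[\hat{x}=v_s]=x_s$ in Theorem \ref{thm:equil-linearcost} (the ratio of any such cone to $\Delta_k$ is just the $s$-th coordinate of its apex) then yields the clean closed form
\begin{equation*}
\Pr[\hat{x}=v_s,\,\hat{y}=v_s]\;=\;\frac{1}{M_s}, \qquad M_s:=\sum_{j\in\mathcal{H}}\max\!\left(\frac{x_j}{x_s},\frac{y_j}{y_s}\right).
\end{equation*}

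What remains is purely algebraic: prove $\sum_s 1/M_s \geq 1-d(x,y)$. Using $\max(a,b)=\tfrac{1}{2}(a+b)+\tfrac{1}{2}|a-b|$ together with $\sum_j x_j=\sum_j y_j=1$, the denominator of $1/M_s$ becomes $x_s+y_s+B_s$ with $B_s:=\sum_j|x_j y_s-y_j x_s|$. The triangle inequality $|x_j y_s-y_j x_s|\leq y_s|x_j-y_j|+y_j|x_s-y_s|$ (and its $x\leftrightarrow y$ swap) gives $B_s\leq \min(x_s,y_s)\,d(x,y)+|x_s-y_s|$, which simplifies to $1/M_s\geq \min(x_s,y_s)/(1+d(x,y)/2)$. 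Summing in $s$ and using $\sum_s\min(x_s,y_s)=1-d(x,y)/2$ produces $\sum_s 1/M_s\geq (1-d/2)/(1+d/2)$, and the elementary identity $(1-d/2)-(1-d)(1+d/2)=d^2/2\geq 0$ finishes the argument.

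The main obstacle is locking down the volume formula $\Pr[\hat{x}=\hat{y}=v_s]=1/M_s$: one must recognize $A_{x,s}\cap A_{y,s}$ as a sub-simplex with the face $\{v_j:j\neq s\}$ and a single new apex $z^{(s)}$, so that the volume ratio to $\Delta_k$ can be read off $z^{(s)}_s$. Once this identity is established, the inequality chain above is mechanical, and degenerate cases where $x_s$ or $y_s$ vanishes contribute $0$ to both sides (consistent with $\Pr[\hat{x}=v_s]=x_s=0$) and cause no trouble.
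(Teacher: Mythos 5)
Your proof is correct, and it is worth noting that the paper itself does not prove Theorem~\ref{thm:dis}: it imports the statement from Ge et al.~\cite{ge2011}, so there is no in-paper argument to compare against. Your write-up is a valid, self-contained reconstruction. The key steps all check out: the barycentric expansion gives exactly the halfspace description $A_{x,s}=\{w\in\Delta_k: x_jw_s\le x_sw_j\ \forall j\}$ (for $x_s>0$); the intersection $A_{x,s}\cap A_{y,s}$ is the cone $A_{z^{(s)},s}$ with apex coordinate $z^{(s)}_s=1/M_s$, so the volume-ratio principle gives $\Pr[\hat{x}=\hat{y}=v_s]=1/M_s$; and the algebraic chain is sound --- I verified $B_s\le\min(x_s,y_s)\,d+|x_s-y_s|$, the cross-multiplication showing $2x_sy_s/(2\max(x_s,y_s)+\min(x_s,y_s)d)\ge\min(x_s,y_s)/(1+d/2)$, the identity $\sum_s\min(x_s,y_s)=1-d/2$, and $(1-d/2)-(1-d)(1+d/2)=d^2/2\ge 0$. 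The degenerate coordinates with $x_s=0$ or $y_s=0$ are correctly dismissed since $A_{x,s}$ then has measure zero. Two small remarks: you actually establish the stronger bound $\mathbb{E}[d(\hat{x},\hat{y})]\le 4d(x,y)/(2+d(x,y))$, which is the form the collision analysis naturally produces and which implies the stated factor of $2$ only at the final weakening step; and the crucial structural insight --- that the intersection of two rounding cones with the same apex index is again a rounding cone --- is exactly the lemma one needs, so you have not skipped the hard part.
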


Furthermore, by Chekuri et al \cite{chekuri}, if $(\vec{x}^*,
\vec{X}^*)$ is an optimal solution to LP1, then
\begin{eqnarray}\label{relation}
\sum_{s\in \mathcal{H}} |x_{is}^* - x_{js}^*| \le \sum_{s\in
\mathcal{H}}\sum_{t\in\mathcal{H}, t\neq s}\left(X_{ijst}^* +
X_{ijts}^*\right).
\end{eqnarray}

Combining Theorem \ref{thm:equil-linearcost}, \ref{thm:dis},
inequality (\ref{relation}) and the fact that LP1 provides a lower
bound of FHSAP-QP, we have the following theorem for our geometric
rounding algorithm:
\begin{theorem}\label{thm:bound}
Assume all $c_{st}$'s are equal. Let $\vec{x}^*$ be the optimal
solution of LP1. And $\hat{x}$ be the integral solution obtained
from FHSAP-GRA. Then $\hat{x}$ is a 2-approximation for the original
problem. That is
\begin{eqnarray*}
L(\hat{x}) + Q(\hat{x}) \le 2 {\mbox{OPT}}
\end{eqnarray*}
where {\mbox{OPT}} is the optimal value of FHSAP-QP.
\end{theorem}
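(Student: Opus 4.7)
The plan is to prove the expected-cost version of the inequality, namely $\mathbb{E}[L(\hat{x})+Q(\hat{x})]\le 2\,\text{OPT}$, by handling the linear and quadratic parts of the objective separately and then combining them against the LP1 lower bound. Throughout I will let $(\vec{x}^*,\vec{X}^*)$ be an optimal solution of LP1, and split the LP1 objective as $L(x^*)+Q_{\text{LP}}(X^*)$, where $Q_{\text{LP}}(X^*):=\sum_{i,j\in\mC}\sum_{s,t\in\mH}d_{ij}c_{st}X^*_{ijst}$; note $L(x^*)+Q_{\text{LP}}(X^*)\le\text{OPT}$ because LP1 is a relaxation of FHSAP-QP.

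First I would handle the linear term. By linearity of expectation and Theorem~\ref{thm:equil-linearcost}, $\mathbb{E}[\hat{x}_{is}]=x^*_{is}$ for every $i,s$, so $\mathbb{E}[L(\hat{x})]=L(x^*)$; this step is essentially immediate.

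Next I would handle the quadratic term. Let $c$ denote the common value of $c_{st}$ for $s\neq t$, and use the identity stated in the excerpt, namely that for any integral $\hat{x}$,
\begin{equation*}
\sum_{s,t\in\mH:s\neq t}\hat{x}_{is}\hat{x}_{jt}=\tfrac{1}{2}\sum_{s\in\mH}|\hat{x}_{is}-\hat{x}_{js}|=\tfrac{1}{2}d(\hat{x}_i,\hat{x}_j).
\end{equation*}
This lets me rewrite
\begin{equation*}
Q(\hat{x})=c\sum_{i,j\in\mC}d_{ij}\sum_{s\neq t}\hat{x}_{is}\hat{x}_{jt}=\tfrac{c}{2}\sum_{i,j\in\mC}d_{ij}\,d(\hat{x}_i,\hat{x}_j).
\end{equation*}
Taking expectation and applying Theorem~\ref{thm:dis} gives $\mathbb{E}[Q(\hat{x})]\le c\sum_{i,j}d_{ij}\,d(x^*_i,x^*_j)$. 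Then I invoke inequality (\ref{relation}) and the symmetry $c_{st}=c_{ts}$ to bound $d(x^*_i,x^*_j)\le\sum_{s\neq t}(X^*_{ijst}+X^*_{ijts})$, which yields $\mathbb{E}[Q(\hat{x})]\le 2c\sum_{i,j}d_{ij}\sum_{s\neq t}X^*_{ijst}=2\,Q_{\text{LP}}(X^*)$ (using $c_{ss}=0$ so the sum over $s=t$ vanishes).

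Combining the two bounds gives $\mathbb{E}[L(\hat{x})+Q(\hat{x})]\le L(x^*)+2\,Q_{\text{LP}}(X^*)\le 2\bigl(L(x^*)+Q_{\text{LP}}(X^*)\bigr)\le 2\,\text{OPT}$, which is the desired 2-approximation (and by the probabilistic method, some realization attains this bound). The only delicate point, and the main place to be careful, is the bookkeeping of the factors of $2$: the identity above already produces a factor $\tfrac{1}{2}$ that cancels with the factor $2$ from Theorem~\ref{thm:dis}, and the further factor $2$ from inequality (\ref{relation}) is exactly what gets absorbed into the relaxation $L(x^*)\le L(x^*)+Q_{\text{LP}}(X^*)$. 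Everything else is routine substitution, so I expect no serious obstacle beyond keeping these constants straight and using the equal-cost assumption at the right moment to factor $c$ out of the inter-hub sum.
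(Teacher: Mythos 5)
Your proposal is correct and is exactly the argument the paper intends: it combines Theorem~\ref{thm:equil-linearcost} for the linear cost, Theorem~\ref{thm:dis} together with the integral identity $\sum_{s\neq t}\hat{x}_{is}\hat{x}_{jt}=\tfrac{1}{2}d(\hat{x}_i,\hat{x}_j)$ for the quadratic cost, inequality (\ref{relation}) to pass to the $X^*$ variables, and the fact that LP1 lower-bounds OPT, with the constants tracked correctly (the paper itself only sketches this combination in one sentence, and your careful accounting even fixes a factor-of-two typo in its displayed identity). Your added remark that the bound holds in expectation and hence for some realization is the right way to read the theorem's deterministic phrasing.
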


In a more general case when the inter-hub costs are not equal, we
have the following extension of Theorem \ref{thm:bound}.

\begin{theorem}\label{thm:extension}
Let $L = \max\{c_{st}\colon s,t\in \mathcal{H}, s\neq t \}$, $ \l =
\min\{c_{st}\colon s,t\in \mathcal{H}, s\neq t\}$ and $r = L/l$.
Then the algorithm FHSAP-GRA using the LP relaxation LP1 has a
performance guarantee of $2r$.
\end{theorem}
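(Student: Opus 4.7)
The plan is to mimic the proof of Theorem~\ref{thm:bound}, tracking where the non-uniformity of the inter-hub costs introduces extra loss. Fix an optimal solution $(\vec{x}^*,\vec{X}^*)$ of LP1 and decompose its objective as $L(\vec{x}^*)+Q^{LP}(\vec{x}^*,\vec{X}^*)$, where $Q^{LP}(\vec{x}^*,\vec{X}^*):=\sum_{i,j\in\mC}d_{ij}\sum_{s\neq t}c_{st}X^*_{ijst}$; this sum is at most $\mathrm{OPT}$ since LP1 is a relaxation. For the linear part, Theorem~\ref{thm:equil-linearcost} and linearity of expectation yield $\mathbb{E}[L(\hat{x})]=L(\vec{x}^*)$ exactly as in the equal-cost case, so all the work is concentrated in the quadratic part.

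I would bound $\mathbb{E}[Q(\hat{x})]$ one pair $(i,j)$ at a time. Using $c_{st}\le L$ for $s\neq t$ together with the integrality identity $\sum_{s\neq t}\hat{x}_{is}\hat{x}_{jt}=\tfrac{1}{2}\sum_s|\hat{x}_{is}-\hat{x}_{js}|$ (which holds because each row of $\hat{x}$ is a vertex of $\Delta_k$, exactly the observation made just before Theorem~\ref{thm:equil-linearcost}), I obtain
\[
\sum_{s\neq t}c_{st}\hat{x}_{is}\hat{x}_{jt}\;\le\; L\sum_{s\neq t}\hat{x}_{is}\hat{x}_{jt}\;=\;\tfrac{L}{2}\,d(\hat{x}_i,\hat{x}_j).
\]
Taking expectations and applying Theorem~\ref{thm:dis}, the right-hand side is at most $L\,d(\vec{x}^*_i,\vec{x}^*_j)$.

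Next I would invoke inequality \eqref{relation} to replace $d(\vec{x}^*_i,\vec{x}^*_j)$ with $2\sum_{s\neq t}X^*_{ijst}$, and then use $c_{st}\ge l$ to re-introduce the cost weights via $\sum_{s\neq t}X^*_{ijst}\le \tfrac{1}{l}\sum_{s\neq t}c_{st}X^*_{ijst}$. Chaining the inequalities gives
\[
\mathbb{E}\!\left[\sum_{s\neq t}c_{st}\hat{x}_{is}\hat{x}_{jt}\right]\;\le\; \frac{2L}{l}\sum_{s\neq t}c_{st}X^*_{ijst}\;=\;2r\sum_{s\neq t}c_{st}X^*_{ijst}.
\]
Multiplying by $d_{ij}$ and summing over $(i,j)$ yields $\mathbb{E}[Q(\hat{x})]\le 2r\,Q^{LP}(\vec{x}^*,\vec{X}^*)$. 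Combining with the linear-part identity and using $r\ge 1$ to absorb the linear term,
\[
\mathbb{E}[L(\hat{x})+Q(\hat{x})]\;\le\; L(\vec{x}^*)+2r\,Q^{LP}(\vec{x}^*,\vec{X}^*)\;\le\; 2r\bigl(L(\vec{x}^*)+Q^{LP}(\vec{x}^*,\vec{X}^*)\bigr)\;\le\; 2r\cdot\mathrm{OPT}.
\]

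There is no deep obstacle here; the sole conceptual step is the conversion of the raw flow quantity $\sum_{s\neq t}X^*_{ijst}$ supplied by \eqref{relation} back into the cost-weighted quantity appearing in the LP objective, and this conversion is precisely where the factor $r=L/l$ is spent. Every other step already appears in the uniform-cost proof of Theorem~\ref{thm:bound}, so the factor $2$ coming from Theorem~\ref{thm:dis} simply multiplies by the worst-case "non-uniformity ratio" $r$.
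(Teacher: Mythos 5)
Your proof is correct, but it takes a genuinely different route from the paper's. The paper argues by reduction: it constructs a modified instance $I_L$ in which every inter-hub cost is raised to $L$, observes that for any allocation the cost in $I_L$ is sandwiched between the cost in $I$ and $r$ times that cost, and then invokes the uniform-cost $2$-approximation of Theorem~\ref{thm:bound} on $I_L$ to conclude a $2r$ guarantee for $I$. That argument is three lines long, but strictly speaking it analyzes FHSAP-GRA run on the LP relaxation of the \emph{modified} instance $I_L$, not of the original one. Your argument instead re-runs the expectation analysis behind Theorem~\ref{thm:bound} directly on the original instance: the linear part is preserved exactly by Theorem~\ref{thm:equil-linearcost}, the quadratic part is bounded pairwise via the integrality identity, Theorem~\ref{thm:dis}, and inequality (\ref{relation}), and the factor $r$ enters precisely when you convert the unweighted flow quantity $\sum_{s\neq t}X^*_{ijst}$ back into the cost-weighted term $\frac{1}{l}\sum_{s\neq t}c_{st}X^*_{ijst}$ of the LP objective. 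What your version buys is a statement about the algorithm exactly as described in the theorem (LP1 solved on $I$ itself), plus the slightly sharper intermediate bound $\mathbb{E}[L(\hat{x})+Q(\hat{x})]\le L(\vec{x}^*)+2r\,Q^{LP}(\vec{x}^*,\vec{X}^*)$ before the linear term is absorbed; what the paper's version buys is brevity and the reuse of Theorem~\ref{thm:bound} as a black box. Both are valid; all the individual steps you use (the integrality identity, Theorems~\ref{thm:equil-linearcost} and~\ref{thm:dis}, and inequality (\ref{relation})) are exactly the ingredients the paper assembles for the uniform-cost case, so your chain of inequalities checks out.
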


\begin{proof}
Given an instance $I$ of the FHSAP, we build another instance
denoted by $I_L$, in which all of the inter-hub costs are set to
$L$. Obviously, for any allocation vector $\vec{x}$, the cost in
$I_L$ is less than $r$ times the cost in $I$. And by using the algorithm
FHSAP-GRA, we can obtain a 2-approximation for $I_L$, which directly
translates into a $2r$-approximation for $I$.
\end{proof}

It is tempting to extend the above results to the case when LP3 is
used. However, since LP3 is a further relaxation of LP1 and the
structure of the objective is not the same, we are not able to prove
the same bound when we use GRA combined with LP3. However, as we
will show in our numerical experiments, using LP3 combined with the
geometric rounding procedure produces good solutions.

\section{Computational Results}
\label{sec:numerical}

In this section, we implement our algorithm (FHSAP-GRA) and report
its performance. We test the algorithm on both randomly generated
instances (Table \ref{table:randomdata1}) and a benchmark problem
(Table \ref{table:benchmark}). All linear programs in the
experiments are solved by CPLEX version $9.0$ at a workstation with
3GHz CPU and 8GB memory.

In Table \ref{table:randomdata1}, we consider three setups of the
problem: $n = 50$, $k=5$; $n=100$, $k=10$ and $n=200$, $k=10$. In
each of the setup, demands between any two cities are generated from
uniform distributions $U[0,100]$ and hub to city costs are generated
from $U[1,11]$. Then we choose different distributions for the
inter-hub costs to conduct our tests, which are shown in the second
column. We try to solve all the three LP relaxation problems (LP1,
LP2 and LP3) introduced in Section \ref{sec:model} and apply the
geometric rounding algorithm to the solutions we obtain. The results
are shown in the three columns GRA-LP1, GRA-LP2 and GRA-LP3
respectively. Within each of the three sets of experiments, the CPU
columns show the time (in seconds) our program takes to solve each
LP relaxation (we find that the time to perform the rounding
procedure is negligible, therefore we only report the time to solve
the LPs in our test results). Gap-LP1 columns show the gap between
the cost of our obtained integral solution (for each fractional
solution we obtain, we do the random rounding 5000 times and pick
the best results) and the optimal value of LP1. More precisely, if
we denote the optimal value of LPi by $v_i$ and the value of an
integral solution by algorithm GRA-LPi by $w_i$, then GAP-LP1 is
$(w_i/v_1 - 1)\times 100\%$. Similarly, Gap-LP3 columns show the gap
between the cost of our obtained integral solution and the optimal
value of LP3. That is, GAP-LP3 is $(w_i/v_3 - 1)\times 100\%$. Since
all the LP relaxations are lower bounds of the optimum of the
original problem, these two columns are upper bounds of the
performance gaps between the obtained solution and the true optimal
allocation. Note that when the problem size is large, e.g., $n=200$
and $k=10$, we are not able to solve LP1, we use N/A to denote such
cases.

In Table \ref{table:randomdata1}, we can see that there are several
features of our proposed algorithm (GRA-LP3). First, although
solving LP3 is not as efficient as solving LP2, it is still mostly
tractable while the solution time of LP1 increases very fast and
soon becomes intractable. On the other hand, the solution provided
by GRA-LP3 could provide significant improvement over the solution
that is obtained by using LP2. In the $15$ tests we presented, there
are $9$ cases in which GRA-LP3 could produce the exact optimal
solution, while there are only $3$ if one uses GRA-LP2. Therefore,
we can conclude that GRA-LP3 could deliver higher-quality solutions
than GRA-LP2 within reasonable amount of time.

\begin{table}
\begin{center}
\small
\begin{tabular}{||cc||cc||ccc||ccc||}
\hline\hline

  &  & \multicolumn{ 2}{c||}{GRA-LP1} &       \multicolumn{3}{c||}{GRA-LP2} &  \multicolumn{ 3}{c||}{    GRA-LP3}
  \\
\cline{3-4}\cline{5-7}\cline{8-10}
 \raisebox{1.5ex}[0cm][0cm]{$n$ and $k$} &   \raisebox{1.5ex}[0cm][0cm]{$c_{st}$} &        CPU &       Gap-LP1 &     CPU  &       Gap-LP1 &       Gap-LP3 &       CPU &        Gap-LP1 &       Gap-LP3  \\
\hline
$n = 50$  & U[0,20]  & 3.30 &     0.00\% &       0.04 &     4.24\% &    12.19\% &      3.5 &     3.55\% &    11.45\%  \\
  $k =5$  & U[4,20]  & 3.08 &     0.00\% &       0.04 &     1.83\% &     1.83\% &     1.58 &     0.00\% &     0.00\%  \\
          & U[14,20] & 2.55 &     0.00\% &       0.04 &     4.47\% &     4.47\% &       2.2 &     0.00\% &     0.00\%  \\
          & 10       & 3.1  &     0.00\% &       0.04 &     9.25\% &     9.25\% &       1.36&     0.00\% &     0.00\%  \\
          & 20       & 2.04 &     0.00\% &       0.04 &     0.00\% &     0.00\% &      2.14 &     0.00\% &     0.00\%  \\
\hline\hline
$n = 100$ &  U[0,20] &15249 &     0.00\% &       0.85 &    10.95\% &    51.17\% &         148 &    10.95\% &    51.17\% \\
$k=10$    &  U[4,20] &16851 &     0.00\% &       3.12 &     2.76\% &    15.07\% &         329 &     2.30\% &    14.55\% \\
          & U[14,20] &15439 &     0.00\% &       3.22 &     5.86\% &     7.47\% &         322 &     0.92\% &     2.45\% \\
          & 10       &10103 &     0.00\% &       1.08 &    9.25\%  &     9.25\% &         230 &     0.00\% &     0.00\% \\
          & 20       &13780 &     0.00\% &       4.07 &     0.00\% &     0.00\% &         310 &     0.00\% &     0.00\% \\
\hline\hline
$n=200$   & U[0,20]  & N/A  &    N/A     &       22.5 &     N/A    &    33.11\% &        2549 &    N/A     &    33.11\% \\
$k=10$    & U[4,20]  & N/A  &    N/A     &       23.1 &     N/A    &    11.88\% &        1750 &    N/A     &    11.88\% \\
          & U[14,20] & N/A  &    N/A     &       27.3 &     N/A    &     0.72\% &        3311 &    N/A     &    0.00\%  \\
          & 10       & N/A  &    N/A     &       20.2 &     N/A    &     5.04\% &        1981 &    N/A     &    0.00\%  \\
          & 20       & N/A  &    N/A     &       32.7 &     N/A    &     0.00\% &        3278 &    N/A     &    0.00\%  \\
          \hline\hline
\end{tabular}\caption{Computational
results}\label{table:randomdata1}
\end{center}
\end{table}

Next we test our algorithm using a benchmark problem set AP ({\em
Australia Post}), which was collected from a real postal delivery
network in Australia, see \cite{ernst1}. In \cite{ernst1} and
\cite{ernst2}, Ernst and Krishnamoorthy solve the {\em p}-hub
location problems for the {\em AP} data set, and we test our
algorithms using the hubs their solutions specified. In particular,
some of the hub-to-city cost coefficients are non-symmetric in the
{\em AP} data set. In our experiment, we make adjustment to it
accordingly by specifying in-flow and out-flow coefficients
separately for each $x_{is}$. The results of our tests are shown in
Table \ref{table:benchmark}.

\begin{table}
\begin{center}
\small
\begin{tabular}{||cc|c|cccc||} \hline\hline
           &            &            & \multicolumn{ 4}{c||}{              GRA-LP3}        \\
\cline{4-7}
 \raisebox{1.5ex}[0cm][0cm]{$n$} & \raisebox{1.5ex}[0cm][0cm]{$k$}&\raisebox{1.5ex}[0cm][0cm]{Optimal} &        LP3 &       GRA3 &        CPU &    Gap1  \\
 \hline\hline
        50 &          5 &     132367 &     132122 &     132372 &       6.94 &   0.004\%      \\
        50 &          4 &     143378 &     143200 &     143378 &       4.04 &   0.000\%      \\
        50 &          3 &     158570 &     158473 &     158570 &       1.92 &   0.000\%      \\
        40 &          5 &     134265 &     133938 &     134265 &       2.17 &   0.000\%     \\
        40 &          4 &     143969 &     143924 &     143969 &       1.16 &  0.000\%          \\
        40 &          3 &     158831 &     158831 &     158831 &       0.60 &   0.000\%        \\
        25 &          5 &     123574 &     123574 &     123574 &       0.23 &   0.000\%        \\
        25 &          4 &     139197 &     138727 &     139197 &       0.17 &   0.000\%         \\
        25 &          3 &     155256 &     155139 &     155256 &       0.09 &   0.000\%         \\
        20 &          5 &     123130 &     122333 &     123130 &       0.11 &   0.000\%        \\
        20 &          4 &     135625 &     134833 &     135625 &       0.08 &   0.000\%        \\
        20 &          3 &     151533 &     151515 &     151533 &       0.05 &   0.000\%        \\
        10 &          5 &      91105 &      89962 &      91105 &       0.02 &    0.000\%    \\
        10 &          4 &     112396 &     111605 &     112396 &       0.01 &   0.000\%           \\
        10 &          3 &     136008 &     135938 &     136008 &       0.01 &   0.000\%          \\
\hline\hline
\end{tabular}
\caption{AP benchmark problems.}\label{table:benchmark}
\end{center}
\end{table}

In Table \ref{table:benchmark}, we test 15 {\em AP} benchmark
problems. Since solving {\em FHSAP-LP1} already produces optimal
integral assignments for all 15 problems in less than 120 seconds,
we omit it in the table. In our results, {\em GRA-LP3} obtains
optimal assignments for 14 out of the 15 test problems, and the cost
is only $0.004\%$ higher than the optimal cost for the remaining
one. Meanwhile. the time it spends to compute the solution is much
less time than that of {\em GRA-LP1}. Therefore, we conclude that
our approach is quite reliable and efficient in solving real
problems.

\section{Robust FHSAP}
\label{sec:robust}

In previous sections, we studied the fixed-hub single allocation
problem with deterministic demand. In practice, the decision maker
may not have accurate demand information. In such cases, it is of
great interest for the decision maker to have a {\it robust} policy,
which protects him from any realization of demand. In this section,
we propose a {\it robust} formulation for the FHSAP and provide an
efficient algorithm for it.

We adopt the notations used in Section \ref{sec:model}. However,
instead of knowing the pairwise demand $\vec{d} = \{d_{ij}\}$
exactly, we only know that they are in the following set:
\begin{eqnarray}\label{demandset}
{\mathcal D} = \{\vec{d}\mbox{ } : \mbox{ } ||\Sigma^{-1} (\vec{d} -
\vec{u})||_p \le Q\}.
\end{eqnarray}
Here $\vec{u} = \{u_{ij}\}$ is the nominal demand, $\Sigma =
\mbox{diag}\{\sigma_{ij}\}$ is a weight matrix and $||\cdot||_p$ is
the $p$-norm ($p\ge 1$) of a vector defined by
\begin{eqnarray}\label{pnorm}
||x||_p = \left(\sum_{i=1}^n x_i^p\right)^{1/p}.
\end{eqnarray}
The right hand side $Q$ in (\ref{demandset}) is the ``budget'' of
robustness, indicating one's uncertainty level about the input data.
Such an uncertainty set is quite common in the {\it robust} optimization
literature with most common choices of $p$ to be $1$, $2$ or
$\infty$. For a comprehensive review of the {\it robust} optimization
literature, we refer the readers to Ben-Tal et al. \cite{bental}.

Now we consider the {\it robust} formulation for FHSAP. We start
from FHSAP-QP. In the {\it robust} formulation, we aim to minimize
the worst-case cost for any demand realization that is in the set
$\mathcal{D}$. Therefore, the {\it robust} formulation for FHSAP-QP
can be written as:
\begin{equation}\label{Robust-FHSAP}
\begin{array}{rcl}
   \mbox{minimize}_{x_{ij}\in \{0,1\}} &  \mbox{maximize}_{d_{ij}}  &   \sum_{i,j\in \mC}d_{i,j}\left[\sum_{s\in {\mathcal{H}}}c_{is}x_{is} + \sum_{t\in{\mathcal {H}}} c_{jt}x_{jt}+\sum_{s,t\in {\mathcal H}} c_{st}x_{is}x_{jt}\right] \\
                                       &  \mbox{subject to}         & ||\Sigma^{-1} (\vec{d} -  \vec{u})||_p \le
Q.
\end{array}
\end{equation}

One feature of this {\it robust} formulation is that given a set of
$\vec{x}$, the inside maximization problem has an explicit optimal
solution. Define
\begin{eqnarray*}
f_{ij} = \sum_{s\in {\mathcal{H}}}c_{is}x_{is} + \sum_{t\in{\mathcal
{H}}} c_{jt}x_{jt}+\sum_{s,t\in {\mathcal H}} c_{st}x_{is}x_{jt},
\end{eqnarray*}
the inside problem can be written as
\begin{eqnarray}
\mbox{maximize}_{\vec{d}} & \vec{f}^T \vec{d} \nonumber\\
\mbox{subject to} & ||\Sigma^{-1} (\vec{d} -  \vec{u})||_p \le Q.
\label{inside}
\end{eqnarray}
By using standard Lagrangian method, one can obtain the optimal
value to (\ref{inside}) as:
\begin{eqnarray*}
\vec{f}^T\vec{u} +  Q ||\Sigma \vec{f}||_q
\end{eqnarray*}
where $q = p/(p-1)$. Therefore, the {\it robust} counterpart of
(\ref{inside}) can be written as:

\begin{eqnarray}\label{robust_conterpart}
\mbox{minimize}_{x_{is}}  & \sum_{i,j\in {\mathcal C}}
u_{ij}\left[\sum_{s\in {\mathcal{H}}}c_{is}x_{is} +
\sum_{t\in{\mathcal {H}}} c_{jt}x_{jt}+\sum_{s,t\in {\mathcal H}}
c_{st}x_{is}x_{jt}\right]\nonumber\\
& + Q \left(\sum_{i,j} \left(\sigma_{ij} (\sum_{s\in
{\mathcal{H}}}c_{is}x_{is} + \sum_{t\in{\mathcal
{H}}}c_{jt}x_{jt}+\sum_{s,t\in {\mathcal H}}
c_{st}x_{is}x_{jt})\right)^q\right)^{1/q}\nonumber\\
\mbox{subject to} & x_{is} \in \{0,1\}, \forall i, s.\nonumber
\end{eqnarray}
Now if the inter-hub costs are the same (w.l.o.g., all equals to
one), we can write $\sum_{s,t\in {\mathcal H}} c_{st}x_{is}x_{jt}$
as $\sum_{r\in {\mathcal H}} |x_{ir} - x_{jr}|$. Further relaxing
the binary constraints on $x_{is}$, we obtain a convex optimization
problem:
\begin{eqnarray}\label{robust_counterpart2}
\mbox{minimize}_{x_{is}, Z_{ij}} & \sum_{i,j} u_{ij}Z_{ij} + Qt
\nonumber\\
\mbox{subject to} & Z_{ij} \ge \sum_{s\in {\mathcal{H}}}c_{is}x_{is}
+ \sum_{t\in{\mathcal {H}}}c_{jt}x_{jt}+ \sum_{r\in {\mathcal H}}
|x_{ir} - x_{jr}| \nonumber\\
& t \ge ||\Sigma Z||_q\\
& \sum_{s\in {\mathcal H}} x_{is} = 1, \quad\quad\forall i\nonumber\\
& 0\le x_{is} \le 1, \quad\quad\forall i, s. \nonumber
\end{eqnarray}

In general, the optimal solution in (\ref{robust_counterpart2}) is
fractional. However, as we will show in the simulation results, in
most of our test problems, an integral solution is automatically
obtained. And for the cases in which a fractional solution exists,
we can again apply the geometric rounding technique introduced in
Section \ref{sec:rounding} to obtain an integral solution.

\subsection{Numerical Experiments}
\label{subsec:numerical_robust} In this section, we perform
numerical tests using the {\it robust} approach we proposed above. We show
that the {\it robust} approach can indeed guard against data uncertainty
without compromising much the average performance (both in terms of
computational efficiency and solution quality).

In the following we use the {\it robust} approach
(\ref{Robust-FHSAP}) with $p=2$. In such cases, the {\it robust}
counterpart (\ref{robust_counterpart2}) will be a second-order cone
program (SOCP). In the tests, we study different experimental
setups. For each setup, we consider the following two approaches:
\begin{enumerate}
\item Simply use the nominal demand to obtain the solution. We use
the GRA-LP3 approach discussed in Section \ref{sec:rounding}. The
integral solution obtained (after rounding) is denoted by
$\tilde{x}$.
\item We use the {\it robust} approach with a predetermined budget of
robustness $Q$. We then solve the program
(\ref{robust_counterpart2}) to obtain the optimal solution (combined
with the same geometric rounding procedure if the solution is
fractional). We denote this solution by $\hat{x}$.
\end{enumerate}
To evaluate the solutions, we use two performance measures. First,
we evaluate the solutions at the nominal demand. We write
$\tilde{F}(x)$ to denote the cost of using $x$ at the nominal demand
and define $Gap1$ as $\frac{\tilde{F}(\hat{x}) -
\tilde{F}(\tilde{x})}{\tilde{F}(\tilde{x})}\time 100\%$ to measure
the percentage gap between the costs in nominal cases. Second, we
compute the worst-case costs of our solutions for the demand in the
{\it robust} set we defined. That is, we compute the objective
values of (\ref{robust_counterpart2}). We use $\hat{F}(x)$ to denote
the worst-case cost of using $x$ and define $Gap2$ as
$\frac{\hat{F}(\tilde{x}) -
\hat{F}(\hat{x})}{\hat{F}(\hat{x})}\times 100\%$.

In the experiment, we test the 15 AP benchmark problems. We use the
nominal demand of AP data set as the parameter $u$ which could be
interpreted as the mean of the uncertain demand. The parameter
$\sigma$ is generated from a standard lognormal distribution, and
then multiplied by $100$. The parameter $Q$ varies with the
magnitude of demands in different benchmark cases. The results are
shown in Table \ref{table:robust}.

\begin{table}
\begin{center}
\begin{tabular}{||ccc|cc|ccc|ccc||}
  \hline\hline
  $n$ & $k$ & $Q$ & Time1 & Time2 & $\tilde{F}(\tilde{x})$ & $\tilde{F}(\hat{x})$ & $Gap1$ &
  $\hat{F}(\tilde{x})$ & $\hat{F}(\hat{x})$ & $Gap2$ \\
\hline
50    & 5     & 100   & 9.51  & 1378.08 & 164806 & 165833 & 0.62\% & 5951208 & 5741017 & 3.66\% \\
    50    & 4     & 100   & 4.87  & 810.92 & 153588 & 153802 & 0.14\% & 5388655 & 5373987 & 0.27\% \\
    50    & 3     & 100   & 2.81  & 750.52 & 163094 & 164633 & 0.94\% & 5393514 & 5364942 & 0.53\% \\
    40    & 5     & 100   & 4.52  & 317.26 & 159470 & 159470 & 0.00\% & 4394564 & 4394587 & 0.00\% \\
    40    & 4     & 100   & 3.06  & 288.05 & 155004 & 155004 & 0.00\% & 4053568 & 4053573 & 0.00\% \\
    40    & 3     & 100   & 1.97  & 297.74 & 164604 & 165257 & 0.40\% & 5037397 & 4997983 & 0.79\% \\
    25    & 5     & 200   & 1.72  & 50.18 & 154884 & 155293 & 0.26\% & 4028791 & 3958828 & 1.77\% \\
    25    & 4     & 200   & 1.42  & 41.53 & 160143 & 161396 & 0.78\% & 3795295 & 3711129 & 2.27\% \\
    25    & 3     & 200   & 1.63  & 59.45 & 159383 & 160119 & 0.46\% & 4011292 & 3964292 & 1.19\% \\
    20    & 5     & 400   & 1.31  & 18.43 & 151508 & 154375 & 1.89\% & 4242145 & 4230509 & 0.28\% \\
    20    & 4     & 400   & 0.91  & 15.02 & 155042 & 159317 & 2.76\% & 4314851 & 4232389 & 1.95\% \\
    20    & 3     & 400   & 1.28  & 12.06 & 153353 & 155595 & 1.46\% & 7627887 & 6676250 & 14.25\% \\
    10    & 5     & 600   & 2.31  & 4.51  & 122539 & 126367 & 3.12\% & 1669913 & 1652178 & 1.07\% \\
    10    & 4     & 600   & 1.50  & 4.35  & 121391 & 121524 & 0.11\% & 1805759 & 1718371 & 5.09\% \\
    10    & 3     & 600   & 0.77  & 3.35  & 137723 & 139034 & 0.95\% & 2685820 & 2481755 & 8.22\% \\
  \hline\hline
\end{tabular}\caption{Comparison between regular approach and robust approach}\label{table:robust}
\end{center}
\end{table}

In Table \ref{table:robust}, Time1 is the computational time of the
regular approach and Time2 is the computational time of the it
robust approach. First we can observe that although Time2 is larger
that Time1, it is still tractable. Regarding the performances of the
two approaches, we can see that Gap1 is smaller than $1\%$ in 11
cases out of 15 and is less than $Gap2$ in 12 cases while Gap2 is
larger than $1\%$ in 9 cases in 15 and is almost $15\%$ in the
$20$-city-$2$-hub case. Therefore, the additional cost for the
solution $\hat{x}$ that one has to pay at the nominal demand is much
smaller than the potential additional costs one needs to pay if one
use $\tilde{x}$ but the demands turn out to be adverse. Or in other
words, the benefit of the robust approach outweighs the cost.

\section{Conclusion}
\label{sec:conclusion}

In this paper, we studied the fixed-hub single allocation problem.
We made two contributions. First, we proposed a new solution
approach for this problem by establishing a new LP relaxation
formulation. This new LP relaxation lies in between two known
relaxations in the literature, and by showing numerical results of
this relaxation, we show that it has a good balance between
computational complexity and the solution quality. Second, we
propose a {\it robust} version of the FHSAP problem. The robust
problem aims to minimize the worst-case cost when the demand is only
known to be in a certain set. We propose an algorithm to solve the
robust FHSAP problem and show that indeed the robust formulation can
guard against data uncertainty with relatively little cost. We
believe that both of our contributions may help people to find the
desired model/approach when facing such problems in practice.

\bibliographystyle{plain}
\bibliography{hub}

\begin{thebibliography}{10}

\bibitem{bental}
A.~Ben-Tal, L.~El Ghaoui, and A.~Nemirovski.
\newblock {\em Robust Optimization}.
\newblock Princeton Press, 2009.

\bibitem{campbell94b}
J.~Campbell.
\newblock Integer programming formulation of discrete hub location problems.
\newblock {\em European Journal of Operational Research}, 72(2):387--405, 1994.

\bibitem{campbell96}
J.~Campbell.
\newblock Hub location and the p-hub median problem.
\newblock {\em Operations Research}, 44(6):925--935, 1996.

\bibitem{campbell02}
J.~Campbell, A.~Ernst, and M.~Krishnamoorthy.
\newblock Hub location problems.
\newblock In Z.~Drezner and H.W. Hamcher, editors, {\em Facility Location:
  Application and Theory}. Springer, 2002.

\bibitem{chekuri}
C.~Chekuri, S.~Khanna, J.~Naor, and L.~Zosin.
\newblock Approximation algorithms for the metric labeling problem via a new
  linear programming formulation.
\newblock In {\em SODA'01: The twelfth annual ACM-SIAM symposium on Discrete
  algorithms}, pages 109--118, 2001.

\bibitem{ernst1}
A.~Ernst and M.~Krishnamoorthy.
\newblock Efficient algorithms for the uncapacitated single allocation p-hub
  median problem.
\newblock {\em Location Science}, 4(3):139--154, 1996.

\bibitem{ernst2}
A.~Ernst and M.~Krishnamoorthy.
\newblock Solution algorithms for the capacitated single allocation hub
  location problem.
\newblock {\em Annals of Operations Research}, 86:141--159, 1999.

\bibitem{ge2011}
D.~Ge, S.~He, Y.~Ye, and J.~Zhang.
\newblock Geometric rounding: Dependent randomized rounding scheme.
\newblock {\em Journal of Combinatorial Optimization}, 22(4):699--725, 2011.

\bibitem{kt2001}
J.~Kleinberg and E.~Tardos.
\newblock Approximation algorithms for classification problems with pairwise
  relationships: Metric labeling and markov random fields.
\newblock {\em Journal of the ACM}, 49(5):616--639, 2002.

\bibitem{klince91}
J.~Klincewicz.
\newblock Heuristics for the p-hub location problem.
\newblock {\em European Journal of Operational Research}, 53(1):25--37, 1991.

\bibitem{okelly87}
M.~O'Kelly.
\newblock A quadratic integer program for the location of interacting hub
  facilities.
\newblock {\em European Journal of Operational Research}, 33(3):393--402, 1987.

\bibitem{okelly96}
M.~O'Kelly, D.~Bryan, D.~Skorin-Kapov, and J.~Skorin-Kapov.
\newblock Hub network design with single and multiple allocation: a
  computational study.
\newblock {\em Location Science}, 4(3):125--138, 1996.

\bibitem{okelly95}
M.~O'Kelly, D.~Skorin-Kapov, and J.~Skorin-Kapov.
\newblock Lower bounds for the hub location problem.
\newblock {\em Management Science}, 41(4):713--721, 1995.

\bibitem{skorin96}
D.~Skorin-Kapov, J.~Skorin-Kapov, and M.~O'kelly.
\newblock Tight linear programming relaxations of uncapacitated $p$-hub median
  problems.
\newblock {\em European Journal of Operational Research}, 94(3):582--593, 1996.

\bibitem{sohn97}
J.~Sohn and S.~Park.
\newblock The single-allocation problem in the interacting three-hub network.
\newblock {\em Networks}, 35(1):17--25, 2000.

\end{thebibliography}
\end{document}